\def\BibTeX{{\rm B\kern-.05em{\sc i\kern-.025em b}\kern-.08em
    T\kern-.1667em\lower.7ex\hbox{E}\kern-.125emX}}
\newtheorem{thm}{Theorem}
\newtheorem{defn}{Definition}
\newtheorem{lem}{Lemma}
\newtheorem{corollary}{Corollary}
\begin{document}

\title{  Uncertainty principles for the short-time linear canonical transform of complex signals}
\author{\IEEEauthorblockN{Wen-Biao, Gao$^{1}$,
		Bing-Zhao, Li$^{1,2,\star}$,}\\
	\IEEEauthorblockA{
		1. School of Mathematics and Statistics, Beijing Institute of Technology, Beijing 102488, P.R. China}\\
	2. Beijing Key Laboratory on MCAACI, Beijing Institute of Technology, Beijing 102488, P.R. China\\
$^{\star}$Corresponding author: li$\_$bingzhao@bit.edu.cn
}

\maketitle

\begin{abstract}
\label{abstract}
\begin{spacing}{1.25}
The short-time linear canonical transform (STLCT) can be identified as a generalization of the short-time Fourier transform (STFT). It is a novel time-frequency analysis tool. In this paper, we generalize some different uncertainty principles for the STLCT of complex signals. Firstly, one uncertainty principle for the STLCT of complex signals in time and frequency domains is derived. Secondly, the uncertainty principle for the STLCT of complex signals in two STLCT domains is obtained.
Finally, the uncertainty principle for the two conditional standard deviations of the spectrogram associated with the STLCT is discussed.
\end{spacing}
\end{abstract}

\begin{IEEEkeywords}
Linear canonical transform, Short-time linear canonical transform, Uncertainty principle
\end{IEEEkeywords}

\noindent  {\bf 1. Introduction.}

The linear canonical transform(LCT)  \cite{12,13,14,15,16,20,21} is an integral transform with four parameters (a,b,c,d) and it has many special cases such as the Fourier transform (FT), the fractional Fourier transform (FRFT). Three free parameters of the LCT make it more flexible.
The LCT has become a powerful analyzing tool and it has been widely used in some fields, such as applied mathematics, signal
processing, pattern recognition, optical system analysis and filter design \cite{13,14,15}.

Kou and Xu \cite{4} first proposed the short-time linear canonical transform (STLCT). The STLCT is also known
as the windowed linear canonical transform.
It modifies the traditional the linear canonical transform  (LCT) by utilizing a window. Then many scholars has also pay attention to the STLCT. Bahri and Ashino \cite{17} discussed some properties of short-time linear canonical transform. Zhang \cite{12} obtained a sampling theorem for the STLCT by means of a generalized Zak transform associated with the LCT. Zhang \cite{11} discussed the discrete windowed linear canonical transform.
Some scholars \cite{17,18} discussed the uncertainty principles associated with the STLCT. However, the uncertainty principles for the short-time linear canonical transform of complex signal moments have not been found. Therefore, based on the definition of the STLCT, we seek the result via mathematical proof and obtain that the uncertainty principles for the short-time linear canonical transform of complex signals.

The paper is organized as follows: In Section 2, we present some general
definitions. Some different uncertainty principles associated with the STLCT are provided in Section 3.
In Section 4, some conclusions are drawn.

\noindent  {\bf 2. Preliminary.}

Some relevant contents are presented in this section.
\begin{defn}(LCT)
Let $A=(a,b,c,d)$ be a matrix parameter satisfying $a,b,c,d\in \mathbb{R}$ and $ad-bc=1$. For a complex signal $f(t)=x(t)e^{i \phi(t)}\in L^{2}(\mathbb{R})$, where $x(t)$ and $\phi(t)$ are real-valued and differentiable functions, the LCT of $f(t)$ is defined as \cite{1,2}
\begin{align}
		\begin{split}
			L_{A}^{f}(u)=L_{A}[f(t)](u)=\begin{cases}
		\int_{-\infty}^{+\infty}f(t)K_{A}(t,u)\rm{d}\textit{t},   &b\neq0  \\
		\sqrt{d}e^{i\frac{cd}{2}u^{2}}f(du)),    &b=0
		\end{cases}
		\end{split}
	\end{align}
where the transform kernel of LCT is given by
\begin{align}
	K_{A}(t,u)=\frac{1}{\sqrt{i2\pi b}}e^{i\frac{a}{2b}t^{2}-i\frac{1}{b}tu+i\frac{d}{2b}u^{2}}
\end{align} \end{defn}
The following properties \cite{3} which will be useful in this paper:
\begin{align}
	K_{A}^{*}(t,u)=K_{A^{-1}}(u,t)
\end{align}
\begin{align}
	\int _{-\infty}^{+\infty}e^{\pm iux}\rm{d}\textit{u}=2\pi \delta(\textit{x}); \ \ \ \int _{-\infty}^{+\infty}\delta(t-u)\emph{f}(t)\rm{d}\textit{t}=\textit{f}(\textit{u})
\end{align}
\begin{align}
		\begin{split}
		\|f\|^{2}_{L^{2}(\mathbb{R})}=\|f\|^{2}=\int _{-\infty}^{+\infty}|f(t)|^{2}\rm{d}\textit{t}
		\end{split}
	\end{align}
where $A^{-1}=(d,-b,-c,a)$. In the following, we only consider $b\neq0$, since the LCT with $b=0$ is a chirp multiplication operation.
The inverse LCT expression is \cite{2,3}
\begin{align}
	f(t)=\int _{-\infty}^{+\infty}L_{A}^{f}(u)K_{A^{-1}}(u,t)\rm{d}\textit{u}
\end{align}

\begin{defn} (STLCT) Let $g\in L^2(\mathbb{R})$ be a window function. $A=(a,b,c,d)$ be a matrix parameter satisfying $a,b,c,d\in \mathbb{R}$ and $ad-bc=1$. The STLCT of a complex signal $f(\tau)=x(\tau)e^{i \phi(\tau)}\in L^{2}(\mathbb{R})$ with respect to $g$ is defined by \cite{4}
\begin{align}
	S^{A}_{g}f(t,u)=\int _{-\infty}^{+\infty}f_{t}(\tau) K_{A}(\tau,u)\rm{d}\tau
	\end{align}	
where $f_{t}(\tau)=f(\tau) g^{*}(\tau-t)$ and $K_{A}(\tau,u)$ is given by Eq.(2).
\end{defn}

$f_{t}(\tau)=f(\tau) g^{*}(\tau-t)$ is called the local signal, it is the signal we see at time $t$.
A different signal can be seen at a different time.

Let
\begin{align}
Q(t)=\int _{-\infty}^{+\infty}|f(\tau) g^{*}(\tau-t)|^{2}\rm{d}\tau
\end{align}	
then the local normalized signal is
\begin{align}
q_{t}(\tau)=\frac{1}{\sqrt{Q(t)}}f(\tau) g^{*}(\tau-t)
\end{align}	
so we can have $\int _{-\infty}^{+\infty}|q_{t}(\tau)|^{2}\rm{d}\tau=1$.

Inspired by Leon Cohen \cite{7}, we can define the local spectrum by
\begin{align}
L_{u}(\mu)=L_{A}^{f}(\mu)L_{A}^{g}(u-\mu)
\end{align}	
where $L_{A}^{f}(\mu), L_{A}^{g}(\mu)$ are given by Definition 1, $\mu$ is the running frequency and $u$ is the fixed frequency of interest. In order to study the time
properties of a signal at a particular frequency, we can window the spectrum of the signal, $L_{A}^{f}(\mu)$ , with a frequency window function, $L_{A}^{g}(\mu)$.

Let
\begin{align}
P(u)=\int _{-\infty}^{+\infty}|L_{A}^{f}(\mu)L_{A}^{g}(u-\mu)|^{2}\rm{d}\mu
\end{align}
then the normalized local spectrum is
\begin{align}
p_{u}(\mu)=\frac{1}{\sqrt{P(u)}}L_{A}^{f}(\mu)L_{A}^{g}(u-\mu)
\end{align}
The short-frequency time transform is
\begin{align}
s^{A}_{g}f(t,u)=\int _{-\infty}^{+\infty}L_{A}^{f}(\mu)L_{A}^{g}(u-\mu) K_{A}^{*}(t,\mu)\rm{d}\mu
\end{align}
So we can have
\begin{align}
|S^{A}_{g}f(t,u)|^{2}=|s^{A}_{g}f(t,u)|^{2}
\end{align}

Some series of important properties of the STLCT are listed in \cite{4}.

Now, we give a lemma, it is very important in this paper.
\begin{lem} Let $g\in L^2(\mathbb{R})$ be a window function. For a complex signal $f(\tau)=x(\tau)e^{i \phi(\tau)}\in L^{2}(\mathbb{R})$, we have
\begin{align}
	S^{A}_{g}f(t,u)=\int _{-\infty}^{+\infty}L^{\textit{f}}_{A}(\xi)G^{*}(\xi|u,t)\rm{d}\xi
	\end{align}	
where $G^{*}(\xi|u,t)=\sqrt{-i2\pi b}e^{i\frac{d'}{2b}(\xi-u)^{2}}L_{A_{1}}^{g}(\xi-u)^{*}K_{A}(t,u)K^{*}_{A}(t,\xi)$, $L_{A_{1}}^{g}(\xi-u)$ is the LCT of the window function $g(\tau)$, $A_{1}=(0,b,-\frac{1}{b},d')$, the parameter $b$ in the matrix
$A=(a,b,c,d)$ and $d'\in \mathbb{R}$.
\end{lem}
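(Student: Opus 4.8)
The plan is to start from the definition of the STLCT in Eq.~(7), $S^A_g f(t,u)=\int_{-\infty}^{+\infty} f(\tau)\,g^*(\tau-t)\,K_A(\tau,u)\,d\tau$, and to replace $f(\tau)$ by its inverse-LCT representation from Eq.~(6), $f(\tau)=\int_{-\infty}^{+\infty} L_A^f(\xi)\,K_{A^{-1}}(\xi,\tau)\,d\xi$. Substituting and interchanging the order of integration (legitimate since $f,g\in L^2(\mathbb R)$ and the LCT kernel is bounded) pulls the $\xi$-integral outside, giving $S^A_g f(t,u)=\int_{-\infty}^{+\infty} L_A^f(\xi)\,G^*(\xi|u,t)\,d\xi$ with $G^*(\xi|u,t)=\int_{-\infty}^{+\infty} K_{A^{-1}}(\xi,\tau)\,g^*(\tau-t)\,K_A(\tau,u)\,d\tau$. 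By the kernel symmetry $K_{A^{-1}}(\xi,\tau)=K_A^*(\tau,\xi)$ from Eq.~(3), this reduces to a single one-dimensional integral in $\tau$.

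Next I would evaluate that integral explicitly using the formula for $K_A$ in Eq.~(2). The quadratic terms $e^{\pm i\frac{a}{2b}\tau^2}$ coming from $K_A^*(\tau,\xi)$ and $K_A(\tau,u)$ cancel, the two prefactors multiply to $\frac{1}{2\pi b}$, and what remains is $G^*(\xi|u,t)=\frac{1}{2\pi b}\,e^{i\frac{d}{2b}(u^2-\xi^2)}\int_{-\infty}^{+\infty} g^*(\tau-t)\,e^{i\frac{1}{b}\tau(\xi-u)}\,d\tau$. The substitution $\tau\mapsto\tau-t$ then splits off a linear chirp $e^{i\frac{t}{b}(\xi-u)}$ and leaves $\int_{-\infty}^{+\infty} g^*(\tau)\,e^{i\frac{1}{b}\tau(\xi-u)}\,d\tau$.

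The third step is to recognize this last integral as the complex conjugate of an LCT of the window. For $A_1=(0,b,-\tfrac1b,d')$, which satisfies $0\cdot d'-b\cdot(-\tfrac1b)=1$, the kernel $K_{A_1}(\tau,\nu)=\frac{1}{\sqrt{i2\pi b}}e^{-i\frac{1}{b}\tau\nu+i\frac{d'}{2b}\nu^2}$ carries no $\tau^2$ term, so $L_{A_1}^g(\nu)=\frac{1}{\sqrt{i2\pi b}}e^{i\frac{d'}{2b}\nu^2}\int_{-\infty}^{+\infty} g(\tau)e^{-i\frac{1}{b}\tau\nu}\,d\tau$; conjugating and putting $\nu=\xi-u$ yields $\int_{-\infty}^{+\infty} g^*(\tau)e^{i\frac{1}{b}\tau(\xi-u)}\,d\tau=\sqrt{-i2\pi b}\,e^{i\frac{d'}{2b}(\xi-u)^2}L_{A_1}^g(\xi-u)^*$, the free parameter $d'$ being precisely the freedom to attach the extra frequency-side chirp. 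Substituting back, it then only remains to check the algebraic identity $\frac{1}{2\pi b}\,e^{i\frac{t}{b}(\xi-u)+i\frac{d}{2b}(u^2-\xi^2)}=K_A(t,u)K_A^*(t,\xi)$, which is immediate from Eq.~(2) because the $e^{\pm i\frac{a}{2b}t^2}$ factors cancel and the surviving exponents match term by term; collecting all factors gives the stated $G^*(\xi|u,t)$.

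I expect the only delicate point to be the bookkeeping of the chirp phases and the square-root prefactors — keeping track of $\sqrt{i2\pi b}$ versus $\sqrt{-i2\pi b}$ and making sure the quadratic phases in $u$, $\xi$ and $t$ end up where they belong; the Fubini interchange and the identification with $L_{A_1}^g$ are routine once the auxiliary matrix $A_1$ is in hand.
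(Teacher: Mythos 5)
Your proposal is correct and follows essentially the same route as the paper's own proof: substitute the inverse LCT representation of $f$ into the STLCT definition, interchange the order of integration, shift $\tau\mapsto\tau-t$, recognize the remaining window integral as $\sqrt{-i2\pi b}\,e^{i\frac{d'}{2b}(\xi-u)^{2}}L_{A_{1}}^{g}(\xi-u)^{*}$ with $A_{1}=(0,b,-\tfrac{1}{b},d')$, and absorb the leftover phases and prefactors into $K_{A}(t,u)K_{A}^{*}(t,\xi)$. The only cosmetic difference is that you invoke the symmetry $K_{A^{-1}}(\xi,\tau)=K_{A}^{*}(\tau,\xi)$ explicitly, whereas the paper expands $K_{A^{-1}}$ directly; the computation is otherwise identical.
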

\begin{proof}
By Definition 2 and  Eq.(6), we have
\begin{align}
\begin{split}
	S^{A}_{g}f(t,u)&=\int _{-\infty}^{+\infty}f(\tau) \overline{g(\tau-t)} K_{A}(\tau,u)\rm{d}\tau\\
&=\int _{-\infty}^{+\infty} L_{A}^{f}(\xi)\int _{-\infty}^{+\infty}K_{A^{-1}}(\xi,\tau) \overline{g(\tau-t)}K_{A}(\tau,u)\rm{d}\tau\rm{d}\xi
\end{split}
	\end{align}	
Let $G^{*}(\xi|u,t)=\int _{-\infty}^{+\infty}K_{A^{-1}}(\xi,\tau) \overline{g(\tau-t)}K_{A}(\tau,u)\rm{d}\tau$ and $\tau-t=\eta$, then
\begin{align}
\begin{split}
	G^{*}(\xi|u,t)&=\int _{-\infty}^{+\infty}K_{A^{-1}}(\xi,\tau) \overline{g(\tau-t)}K_{A}(\tau,u)\rm{d}\tau\\
&=\int _{-\infty}^{+\infty}\overline{g(\eta)}\frac{1}{\sqrt{-i2\pi b}}\frac{1}{\sqrt{i2\pi b}}e^{-i\frac{(u-\xi)}{b}(\eta+t)+i\frac{d}{2b}(u^{2}-\xi^{2})}\rm{d}\eta\\
&=\frac{1}{\sqrt{i2\pi b}}\int _{-\infty}^{+\infty}\frac{1}{\sqrt{-i2\pi b}}\overline{g(\eta)}e^{i\frac{0}{-2b}\eta^{2}-i\frac{(\xi-u)}{-b}\eta+i\frac{d'}{-2b}(\xi-u)^{2}}\rm{d}\eta\\
&\times e^{i\frac{d'}{2b}(\xi-u)^{2}+i\frac{d}{2b}(u^{2}-\xi^{2})-i\frac{(u-\xi)}{b}t}\\
&=\frac{1}{\sqrt{i2\pi b}}e^{i\frac{d'}{2b}(\xi-u)^{2}}L_{A_{1}}^{g}(\xi-u)^{*}e^{-i\frac{ut}{b}+i\frac{d}{2b}u^{2}}e^{i\frac{\xi t}{b}+i\frac{d}{-2b}\xi^{2}}\\
&=\sqrt{-i2\pi b}e^{i\frac{d'}{2b}(\xi-u)^{2}}L_{A_{1}}^{g}(\xi-u)^{*}K_{A}(t,u)K^{*}_{A}(t,\xi)
\end{split}
	\end{align}	
\end{proof}
\noindent{\bf 3. Uncertainty principle for the STLCT}

Uncertainty principle is an exceptionally useful tool for signal processing and mathematics, it is a fundamental result in signal processing \cite{19,22,23,24}. It expresses the relations between time spread and frequency spread. The classic uncertainty principle in signal processing represents that the product of time spread and frequency spread cannot be made arbitrarily small and it has a lower bound. The uncertainty principles of the STFT had been studied in \cite{7,25,26}.
The purpose of this section is to give some new uncertainty principles associated with the STLCT. We first present the following contents.

Let $f(t)$ be a complex signal such that $t|f(t)| \in L^2(\mathbb{R})$ and $E=\int _{-\infty}^{+\infty}\mid f(t)\mid^{2}\rm{d}\textit{t}=\int _{-\infty}^{+\infty}\mid L_{A}^{\textit{f}}(\textit{u})\mid^{2}\rm{d}\textit{u}$ is the energy of the signal. The time and frequency spread are defined by \cite{5}:
\begin{align}
T^{2}_{\textit{f}}=\frac{1}{E}\int _{-\infty}^{+\infty}(t-\overline{t_{f}})^{2}\mid f(t)\mid^{2}\rm{d}\textit{t}
	\end{align}	
\begin{align}
F^{2}_{A,f}=\frac{1}{E}\int _{-\infty}^{+\infty}(u-\overline{u_{f}})^{2}\mid L_{A}^{\textit{f}}(\textit{u})\mid^{2}\rm{d}\textit{u}
	\end{align}	
where $\overline{t_{f}}$ and $\overline{u_{f}}$ are the mean time and mean frequency, respectively. They are defined as \cite{5}:
\begin{align}
\overline{t_{f}}=\frac{1}{E}\int _{-\infty}^{+\infty}t\mid f(t)\mid^{2}\rm{d}\textit{t}
	\end{align}	
\begin{align}
\overline{u_{A,f}}=\frac{1}{E}\int _{-\infty}^{+\infty}u\mid L_{A}^{\textit{f}}(\textit{u})\mid^{2}\rm{d}\textit{u}
	\end{align}	

In order to obtain the uncertainty principles for the STLCT, in a similar way, we define the following parameters ralated to the STLCT. We first defined the time and frequency spread, respectively:
\begin{align}
T^{2}_{A,\textit{S}}=\frac{1}{\|S^{A}_{g}f(t,u)\|^{2}}\int _{-\infty}^{+\infty}\int _{-\infty}^{+\infty}(t-\overline{t_{A,S}})^{2}\mid S^{A}_{g}f(t,u)\mid^{2}\rm{d}\textit{t}\rm{d}\textit{u}
	\end{align}	
\begin{align}
F^{2}_{A,\textit{S}}=\frac{1}{\|S^{A}_{g}f(t,u)\|^{2}}\int _{-\infty}^{+\infty}\int _{-\infty}^{+\infty}(u-\overline{u_{A,S}})^{2}\mid S^{A}_{g}f(t,u)\mid^{2}\rm{d}\textit{t}\rm{d}\textit{u}
	\end{align}	
where $\overline{t_{A,S}}$ and $\overline{u_{A,S}}$ are the mean time and mean frequency, they are defined by:
\begin{align}
\overline{t_{A,S}}=\frac{1}{\|S^{A}_{g}f(t,u)\|^{2}}\int _{-\infty}^{+\infty}\int _{-\infty}^{+\infty}t\mid S^{A}_{g}f(t,u)\mid^{2}\rm{d}\textit{t}\rm{d}\textit{u}
	\end{align}	
\begin{align}
\overline{u_{A,S}}=\frac{1}{\|S^{A}_{g}f(t,u)\|^{2}}\int _{-\infty}^{+\infty}\int _{-\infty}^{+\infty}u\mid S^{A}_{g}f(t,u)\mid^{2}\rm{d}\textit{t}\rm{d}\textit{u}
	\end{align}	
Next, we obtain the following lemma, it is very useful:
\begin{lem} Let $g\in L^2(\mathbb{R})$ be a window function. For a complex signal $f(\tau)=x(\tau)e^{i \phi(\tau)}\in L^{2}(\mathbb{R})$, we have
\begin{align}
	\overline{t_{A,S}}=\overline{t_{f}}+\overline{t_{g}}
	\end{align}	
\begin{align}
	T^{2}_{A,S}=T^{2}_{f}+T^{2}_{g}
	\end{align}	
\begin{align}
	\overline{u_{A,S}}=\overline{u_{A,f}}-\overline{u_{A_{1},g}}
	\end{align}	
\begin{align}
	F^{2}_{A,S}=F^{2}_{A,f}+F^{2}_{A_{1},g}
	\end{align}	
For a window function $g(t)$ and its LCT, where $\overline{t_{g}}$ and $\overline{u_{A_{1},g}}$ are the mean time and mean frequency,
$T^{2}_{g}$ and $F^{2}_{A_{1},g}$ are the time and frequency spread, respectively. Their definitions are similar to  Eq.(18), Eq.(19), Eq.(20) and  Eq.(21), but the matrix that corresponds to $g(t)$ is
$A_{1}=(0,b,-\frac{1}{b},d')$, the parameter $b$ in the matrix $A=(a,b,c,d)$ and $d'\in \mathbb{R}$.
\end{lem}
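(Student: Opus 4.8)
\noindent\emph{Proof proposal.} The plan is to reduce all four identities to one–dimensional moment computations once two ``marginal'' formulas for the spectrogram $|S^A_g f(t,u)|^{2}$ are in hand, so the first (and essential) step is to establish those marginals. Observe that $S^A_g f(t,u)=L^{f_{t}}_{A}(u)$ is just the LCT in $u$ of the local signal $f_{t}(\tau)=f(\tau)g^{*}(\tau-t)$, and that the inversion formula Eq.~(6) together with Eq.~(3) makes the LCT an isometry of $L^{2}(\mathbb{R})$; Parseval's relation in $u$ therefore gives the time marginal
\[
\int_{-\infty}^{+\infty}|S^A_g f(t,u)|^{2}\,du=\int_{-\infty}^{+\infty}|f_{t}(\tau)|^{2}\,d\tau=Q(t).
\]
For the frequency marginal I would feed in Lemma~1, writing $S^A_g f(t,u)=K_{A}(t,u)\int_{-\infty}^{+\infty}L^{f}_{A}(\xi)H_{u}(\xi)K^{*}_{A}(t,\xi)\,d\xi$ with $H_{u}(\xi)=\sqrt{-i2\pi b}\,e^{i\frac{d'}{2b}(\xi-u)^{2}}L^{g}_{A_{1}}(\xi-u)^{*}$. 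Since $|K_{A}(t,u)|^{2}=\frac{1}{2\pi|b|}$, $|H_{u}(\xi)|^{2}=2\pi|b|\,|L^{g}_{A_{1}}(\xi-u)|^{2}$, and $K^{*}_{A}(t,\xi)=K_{A^{-1}}(\xi,t)$ turns the $\xi$–integral into an inverse LCT in the variable $t$, applying the isometry property in $t$ gives the frequency marginal
\[
\int_{-\infty}^{+\infty}|S^A_g f(t,u)|^{2}\,dt=\int_{-\infty}^{+\infty}|L^{f}_{A}(\xi)|^{2}\,|L^{g}_{A_{1}}(\xi-u)|^{2}\,d\xi=:P_{1}(u).
\]
Integrating either marginal a second time, with the substitution $\eta=\tau-t$ (resp.\ $v=\xi-u$) and translation invariance of Lebesgue measure, yields $\|S^A_g f\|^{2}=E\|g\|^{2}$, the common denominator in Eqs.~(22)--(26).

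Given the marginals, the four identities are bookkeeping. For Eqs.~(27)--(28) I would substitute $Q(t)$ into the $t$–moments Eqs.~(24) and~(22), apply Fubini, and substitute $\eta=\tau-t$ for fixed $\tau$; this decouples the double integral into one factor carried by $|f|^{2}$ and one carried by $|g|^{2}$. Reading off $\overline{t_{f}}$ and $\overline{t_{g}}$ and dividing by $E\|g\|^{2}$ gives Eq.~(27); for Eq.~(28) one expands $(t-\overline{t_{A,S}})^{2}$, uses Eq.~(27), and notes that the cross term factors as $\bigl(\int(\tau-\overline{t_{f}})|f(\tau)|^{2}d\tau\bigr)\bigl(\int(\eta-\overline{t_{g}})|g(\eta)|^{2}d\eta\bigr)=0$ by definition of the means, so only $T^{2}_{f}+T^{2}_{g}$ survives. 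Eqs.~(29)--(30) then follow verbatim with $(t,f,g,\eta)$ replaced by $(u,L^{f}_{A},L^{g}_{A_{1}},v)$, with $Q(t)$ replaced by $P_{1}(u)$, and with $v=\xi-u$ producing the minus sign in $\overline{u_{A,S}}=\overline{u_{A,f}}-\overline{u_{A_{1},g}}$; the same cross–term cancellation gives $F^{2}_{A,S}=F^{2}_{A,f}+F^{2}_{A_{1},g}$.

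The main obstacle is really the pair of marginal formulas: the time marginal rests on the $L^{2}$–isometry of the LCT, which I would extract from Eqs.~(3) and~(6), while the frequency marginal genuinely needs Lemma~1 to move the entire $t$–dependence into an LCT kernel before Parseval becomes applicable — absent Lemma~1 the frequency side does not visibly separate. Secondary points to verify are the legitimacy of the Fubini interchanges and the finiteness of the moment integrals, which is exactly where the standing hypothesis $t|f(t)|\in L^{2}(\mathbb{R})$ (and the implicit assumption that $g$ and the LCTs $L^{f}_{A},L^{g}_{A_{1}}$ have finite spreads) is used; and finally one must read the sign/reflection conventions in Eqs.~(27) and~(29) for the window's mean time and mean frequency consistently from the way $g^{*}(\tau-t)$ and $L^{g}_{A_{1}}(\xi-u)^{*}$ enter $S^A_g f$ — this is the feature that distinguishes the two mixing rules. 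Modulo these, the argument is routine.
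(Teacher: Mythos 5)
Your proposal follows essentially the same route as the paper: both use Lemma~1 together with the kernel orthogonality/Parseval property of the LCT (Eqs.~(3)--(4), (6)) to reduce $|S^{A}_{g}f(t,u)|^{2}$ to the separated densities $|f(\tau)|^{2}|g(\tau-t)|^{2}$ in time and $|L^{f}_{A}(\xi)|^{2}|L^{g}_{A_{1}}(\xi-u)|^{2}$ in frequency, then obtain the means and spreads by the substitutions $\eta=\tau-t$, $v=\xi-u$ with the cross term vanishing by the definition of the means; organizing this through the two marginals $Q(t)$ and $P_{1}(u)$ rather than computing $\|S^{A}_{g}f\|^{2}$ first is only a cosmetic difference. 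One caveat: you cannot get a plus sign for the time mean and a minus sign for the frequency mean from the same argument --- the substitution $\eta=\tau-t$ yields $\overline{t_{A,S}}=\overline{t_{f}}-\overline{t_{g}}$, exactly as the paper's own computation shows, so the ``$+$'' in the stated identity for $\overline{t_{A,S}}$ is a discrepancy in the statement itself, and your appeal to ``sign conventions'' does not resolve it; the spread identities are unaffected since variances are translation invariant.
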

\begin{proof}
We just derive  Eq.(26) and  Eq.(29) here, and the rest two can be derived in a similar way.

First, using Lemma 1 and  Eq.(4) , we have
\begin{align}
\begin{split}
	\|S^{A}_{g}f(t,u)\|^{2}&=\int _{-\infty}^{+\infty}\int _{-\infty}^{+\infty}|S^{A}_{g}f(t,u)|^{2}\rm{d}\textit{t}\rm{d}\textit{u}\\
&=\int _{-\infty}^{+\infty}\int _{-\infty}^{+\infty}\left[\int _{-\infty}^{+\infty}L_{A}^{f}(\xi_{1})\sqrt{-i2\pi b}e^{i\frac{d'}{2b}(\xi_{1}-u)^{2}}L_{A_{1}}^{g}(\xi_{1}-u)^{*}K_{A}(t,u)K^{*}_{A}(t,\xi_{1})\rm{d}\xi_{1}\right]\\
&\times\left[\int _{-\infty}^{+\infty}L_{A}^{f}(\xi_{2})\sqrt{-i2\pi b}e^{i\frac{d'}{2b}(\xi_{2}-u)^{2}}L_{A_{1}}^{g}(\xi_{2}-u)^{*}K_{A}(t,u)K^{*}_{A}(t,\xi_{2})
\rm{d}\xi_{2}\right]^{*}\rm{d}\textit{t}\rm{d}\textit{u}\\
&=\int _{-\infty}^{+\infty}\int _{-\infty}^{+\infty}|L_{A}^{f}(\xi_{1})|^{2}|L_{A_{1}}^{g}(\xi_{1}-u)|^{2}\rm{d}\xi_{1}\rm{d}\textit{u}\\
\end{split}
	\end{align}	
Let $\xi_{1}=\rho$ and $\xi_{1}-u=\gamma$, then
\begin{align}
\begin{split}
	\|S^{A}_{g}f(t,u)\|^{2}&=
\int _{-\infty}^{+\infty}\int _{-\infty}^{+\infty}|L_{A}^{f}(\rho)|^{2}|L_{A_{1}}^{g}(\gamma)|^{2}\rm{d}\rho\rm{d}\gamma\\
&=\|L_{A}^{f}(\rho)\|^{2}\|L_{A_{1}}^{g}(\gamma)\|^{2}
\end{split}
	\end{align}	
Next, by Definition 2 and  Eq.(4), we have
\begin{align}
\begin{split}
	\overline{t_{A,S}}&=\frac{1}{\|S^{A}_{g}f(t,u)\|^{2}}\int _{-\infty}^{+\infty}\int _{-\infty}^{+\infty}t\mid S^{A}_{g}f(t,u)\mid^{2}\rm{d}\textit{t}\rm{d}\textit{u}\\
&=\frac{1}{\|L_{A}^{f}(\rho)\|^{2}\|L_{A_{1}}^{g}(\gamma)\|^{2}}\int _{-\infty}^{+\infty}\int _{-\infty}^{+\infty}t\left[\int _{-\infty}^{+\infty}f(\xi_{1}) \overline{g(\xi_{1}-t)} K_{A}(\xi_{1},u)\rm{d}\xi_{1}\right]\\
&\times\left[\int _{-\infty}^{+\infty}f(\xi_{2}) \overline{g(\xi_{2}-t)} K_{A}(\xi_{2},u)\rm{d}\xi_{2}\right]^{*}\rm{d}\textit{t}\rm{d}\textit{u}\\
&=\frac{1}{\|L_{A}^{f}(\rho)\|^{2}\|L_{A_{1}}^{g}(\gamma)\|^{2}}\int _{-\infty}^{+\infty}\int _{-\infty}^{+\infty}t|f(\xi_{1})|^{2}|g(\xi_{1}-t)|^{2}\rm{d}\xi_{1}\rm{d}\textit{t} \\
\end{split}
	\end{align}	
Let $\xi_{1}-t=\nu$, we can obtain the mean time of STLCT
\begin{align}
\begin{split}
	\overline{t_{A,S}}&=\frac{1}{\|L_{A}^{f}(\rho)\|^{2}\|L_{A_{1}}^{g}(\gamma)\|^{2}}\int _{-\infty}^{+\infty}\int _{-\infty}^{+\infty}(\xi_{1}-\nu)|f(\xi_{1})|^{2}|g(\nu)|^{2}\rm{d}\xi_{1}\rm{d}\nu \\
&=\frac{1}{\|L_{A}^{f}(\rho)\|^{2}}\int _{-\infty}^{+\infty}\xi_{1}|f(\xi_{1})|^{2}|\rm{d}\xi_{1}-
\frac{1}{\|L_{A_{1}}^{g}(\gamma)\|^{2}}\int _{-\infty}^{+\infty}\nu|g(\nu)|^{2}\rm{d}\nu \\
&=\overline{t_{f}}-\overline{t_{g}}
\end{split}
	\end{align}	
According to the same method,  Eq.(28) can be obtained.

Using  Eq.(31),  Eq.(23) and  Eq.(28), the frequency spread of STLCT is shown
\begin{align}
\begin{split}
	F^{2}_{A,S}&=\frac{1}{\|S^{A}_{g}f(t,u)\|^{2}}\int _{-\infty}^{+\infty}\int _{-\infty}^{+\infty}(u-\overline{u_{A,S}})^{2}\mid S^{A}_{g}f(t,u)\mid^{2}\rm{d}\textit{t}\rm{d}\textit{u}\\
&=\frac{1}{\|L_{A}^{f}(\rho)\|^{2}\|L_{A_{1}}^{g}(\gamma)\|^{2}}\int _{-\infty}^{+\infty}\int _{-\infty}^{+\infty}[(\rho-\gamma)-(\overline{u_{A,f}}-\overline{u_{A_{1},g}})]^{2}
|L_{A}^{f}(\rho)|^{2}|L_{A_{1}}^{g}(\gamma)|^{2}\rm{d}\rho\rm{d}\gamma\\
&=\frac{1}{\|L_{A}^{f}(\rho)\|^{2}}\int _{-\infty}^{+\infty}(\rho-\overline{u_{A,f}})^{2}|L_{A}^{f}(\rho)|^{2}\rm{d}\rho+\frac{1}{\|L_{A_{1}}^{g}(\gamma)\|^{2}}\int _{-\infty}^{+\infty}
(\gamma-\overline{u_{A_{1},g}})|L_{A_{1}}^{g}(\gamma)|^{2}\rm{d}\gamma\\
&-2\frac{1}{\|L_{A}^{f}(\rho)\|^{2}}\int _{-\infty}^{+\infty}(\rho-\overline{u_{A,f}})|L_{A}^{f}(\rho)|^{2}\rm{d}\rho\frac{1}{\|L_{A_{1}}^{g}(\gamma)\|}\int _{-\infty}^{+\infty}
(\gamma-\overline{u_{A_{1},g}})|L_{A_{1}}^{g}(\gamma)|^{2}\rm{d}\gamma\\
&=F^{2}_{A,f}+F^{2}_{A_{1},g}-2(\overline{u_{A,f}}-\overline{u_{A,f}})(\overline{u_{A_{1},g}}-\overline{u_{A_{1},g}})\\
&=F^{2}_{A,f}+F^{2}_{A_{1},g}
\end{split}
	\end{align}	
According to the same method,  Eq.(27) can be obtained.
\end{proof}
The uncertainty principles of the LCT are already known. Adrian Stern \cite{6} generalized the classical uncertainty principles of the LCT for complex signal in time and frequency domains. It can be restated as follow:
Let $f(t) \in L^2(\mathbb{R})$ and $A=(a,b,c,d)$ be a matrix parameter satisfying $a,b,c,d\in \mathbb{R}$ and $ad-bc=1$, then
\begin{align}
	T^{2}_{f}F^{2}_{A,f}\geq\frac{|b|^{2}}{4}
	\end{align}	
where $T^{2}_{f}$ and $F^{2}_{A,f}$ are defined in Eq.(18) and Eq.(19).

Based on the uncertainty principle of the LCT \cite{6}, we obtain the following main result.
\begin{thm} (Uncertainty principle in time and frequency domains)
Let $g\in L^2(\mathbb{R})$ be a window function. $A=(a,b,c,d)$ be a matrix parameter satisfying $a,b,c,d\in \mathbb{R}$ and $ad-bc=1$. For a complex signal $f(\tau)=x(\tau)e^{i \phi(\tau)}\in L^{2}(\mathbb{R})$, then
\begin{align}
	T^{2}_{A,S}F^{2}_{A,S}\geq |b|^{2}
	\end{align}	
\end{thm}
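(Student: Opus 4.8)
The plan is to reduce the claim to the linear canonical transform uncertainty principle Eq.(35) by means of the additive decompositions furnished by Lemma 2. First I would invoke Lemma 2, specifically Eq.(27) and Eq.(29), to rewrite the left-hand side as
\[
T^{2}_{A,S}F^{2}_{A,S}=(T^{2}_{f}+T^{2}_{g})(F^{2}_{A,f}+F^{2}_{A_{1},g}),
\]
and then expand the product into the four nonnegative terms $T^{2}_{f}F^{2}_{A,f}$, $T^{2}_{g}F^{2}_{A_{1},g}$, $T^{2}_{f}F^{2}_{A_{1},g}$ and $T^{2}_{g}F^{2}_{A,f}$.

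The key point I would exploit is that the window $g$ is, in both Lemma 1 and Lemma 2, paired with the matrix $A_{1}=(0,b,-\frac{1}{b},d')$, whose second entry is again $b$ and which satisfies $0\cdot d'-b\cdot(-\frac{1}{b})=1$; hence the LCT uncertainty principle Eq.(35) applies to $g$ relative to $A_{1}$ and gives $T^{2}_{g}F^{2}_{A_{1},g}\geq|b|^{2}/4$, exactly the same bound it provides for $f$ relative to $A$. For the two mixed terms I would combine the arithmetic--geometric mean inequality with these two bounds:
\[
T^{2}_{f}F^{2}_{A_{1},g}+T^{2}_{g}F^{2}_{A,f}\geq 2\sqrt{\left(T^{2}_{f}F^{2}_{A,f}\right)\left(T^{2}_{g}F^{2}_{A_{1},g}\right)}\geq 2\sqrt{\frac{|b|^{2}}{4}\cdot\frac{|b|^{2}}{4}}=\frac{|b|^{2}}{2}.
\]
Adding the three lower bounds $\frac{|b|^{2}}{4}+\frac{|b|^{2}}{2}+\frac{|b|^{2}}{4}=|b|^{2}$ then yields Eq.(36).

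Since Lemma 2 already performs the analytic work, what remains is essentially bookkeeping and I do not anticipate a genuine obstacle. The one point deserving care is the legitimacy of applying Eq.(35) to the window: one must verify that $A_{1}$ is a bona fide parameter matrix of determinant one (it is) and that $g$ meets the integrability hypothesis $t|g(t)|\in L^{2}(\mathbb{R})$ that keeps $T^{2}_{g}$ finite, so this should be adopted as a standing assumption on the window. A further remark worth including is sharpness: equality forces equality in Eq.(35) for both $f$ and $g$ (attained by suitably chirped Gaussians) together with equality in the AM--GM step, i.e. $T^{2}_{f}F^{2}_{A,f}=T^{2}_{g}F^{2}_{A_{1},g}$.
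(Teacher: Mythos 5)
Your argument is essentially the paper's own proof: it uses the same Lemma 2 decomposition $T^{2}_{A,S}F^{2}_{A,S}=(T^{2}_{f}+T^{2}_{g})(F^{2}_{A,f}+F^{2}_{A_{1},g})$, applies the LCT bound Eq.(35) to $f$ with $A$ and to $g$ with $A_{1}$ (both giving $|b|^{2}/4$ since the $b$-entry of $A_{1}$ is again $b$), and handles the cross terms by the same AM--GM step. Your added remarks on the determinant of $A_{1}$, the finiteness assumption $t|g(t)|\in L^{2}(\mathbb{R})$, and the equality conditions are correct refinements but do not change the route.
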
	
\begin{proof}
According to Lemma 2, we have $T^{2}_{A,S}F^{2}_{A,S}=(T^{2}_{f}+T^{2}_{g})(F^{2}_{A,f}+F^{2}_{A_{1},g})=T^{2}_{f}F^{2}_{A,f}+T^{2}_{g}F^{2}_{A_{1},g}+T^{2}_{f}F^{2}_{A_{1},g}+T^{2}_{g}F^{2}_{A,f}$,
Using the fact: $n^{2}+m^{2}\geq 2nm$, for $\forall n,m \in \mathbb{R}$, and Eq.(35), we have
\begin{align}
	\begin{split} T^{2}_{A,S}F^{2}_{A,S}&=T^{2}_{f}F^{2}_{A,f}+T^{2}_{g}F^{2}_{A_{1},g}+T^{2}_{f}F^{2}_{A_{1},g}+T^{2}_{g}F^{2}_{A,f}\\
&\geq \frac{|b|^{2}}{4}+\frac{|b|^{2}}{4}+2\sqrt{T^{2}_{f}F^{2}_{A,f}T^{2}_{g}F^{2}_{A_{1},g}}\\
&\geq \frac{|b|^{2}}{2}+\frac{|b|^{2}}{2}=|b|^{2}
\end{split}
	\end{align}	
\end{proof}
When $A=(\cos\alpha,\sin\alpha,-\sin\alpha,\cos\alpha)$, we can obtain the uncertainty principle of the STFrFT for complex signal in time and frequency domains by Theorem 1, which is given in the following corollary.
\begin{corollary}
Let $g\in L^2(\mathbb{R})$ be a window function. For a complex signal $f(\tau)=x(\tau)e^{i \phi(\tau)}\in L^{2}(\mathbb{R})$, then
\begin{align}
	T^{2}_{\alpha,S}F^{2}_{\alpha,S}\geq |\sin\alpha|^{2}
	\end{align}	
\end{corollary}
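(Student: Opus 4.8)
The plan is to obtain Corollary 1 as a direct specialization of Theorem 1 by choosing the parameter matrix to be the one that realizes the fractional Fourier transform. Specifically, I would set $A=(\cos\alpha,\sin\alpha,-\sin\alpha,\cos\alpha)$, which satisfies $ad-bc=\cos^2\alpha+\sin^2\alpha=1$, so it is an admissible LCT matrix, and for this choice the parameter $b=\sin\alpha$. The STLCT $S^{A}_{g}f(t,u)$ then becomes (up to the usual constant factor) the short-time fractional Fourier transform, and the quantities $T^{2}_{A,S}$, $F^{2}_{A,S}$ become $T^{2}_{\alpha,S}$, $F^{2}_{\alpha,S}$ by definition.

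The key steps, in order, are: first, verify that the chosen $A$ is a valid symplectic matrix so that all hypotheses of Theorem 1 are met; second, identify $|b|^{2}=|\sin\alpha|^{2}$; third, rewrite the notation $T^{2}_{A,S}F^{2}_{A,S}$ as $T^{2}_{\alpha,S}F^{2}_{\alpha,S}$ under this parameter choice, which is purely notational since the defining integrals in Eq.(22) and Eq.(23) depend on $A$ only through the transform $S^{A}_{g}f$. Then Theorem 1 immediately yields
\begin{align}
T^{2}_{\alpha,S}F^{2}_{\alpha,S}\geq|b|^{2}=|\sin\alpha|^{2},
\end{align}
which is exactly Eq.(39).

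There is essentially no obstacle here: the corollary is a textbook-style specialization and the entire content is already carried by Theorem 1 and Lemma 2. The only point requiring a sentence of care is making sure the auxiliary matrix $A_{1}=(0,b,-\tfrac1b,d')$ appearing in Lemma 2 also specializes correctly, i.e. that $A_{1}=(0,\sin\alpha,-\tfrac1{\sin\alpha},d')$ remains admissible (its determinant is $0\cdot d'-\sin\alpha\cdot(-\tfrac1{\sin\alpha})=1$), so the window-side spreads $T^{2}_{g}$ and $F^{2}_{A_{1},g}$ are well defined for $\sin\alpha\neq 0$. Assuming $\sin\alpha\neq0$ (the case $\sin\alpha=0$ being the degenerate chirp-multiplication situation already excluded in Section 2), the proof is complete in one line by invoking Theorem 1.
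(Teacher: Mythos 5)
Your proposal is correct and matches the paper's approach exactly: the corollary is obtained by specializing Theorem 1 to $A=(\cos\alpha,\sin\alpha,-\sin\alpha,\cos\alpha)$, so that $|b|^{2}=|\sin\alpha|^{2}$. Your added remark that $\sin\alpha\neq0$ is needed (consistent with the paper's standing assumption $b\neq0$) is a reasonable point of care but does not change the argument.
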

According to Theorem 1, the uncertainty principle of the STFT for complex signal can be obtained in time and frequency domains when $A=(0,1,-1,0)$, which is given in the following corollary.
\begin{corollary}
Let $g\in L^2(\mathbb{R})$ be a window function. For a complex signal $f(\tau)=x(\tau)e^{i \phi(\tau)}\in L^{2}(\mathbb{R})$, then
\begin{align}
	T^{2}_{STFT}F^{2}_{STFT}\geq 1
	\end{align}	
\end{corollary}
 Zhao \cite{5} derived the uncertainty principles for complex signal in tow LCT domains. It can be restated as follow:
If $f(t) \in L^2(\mathbb{R})$, then
\begin{align}
	F^{2}_{A,f}F^{2}_{B,f}\geq\frac{(a_{1}b_{2}-a_{2}b_{1})^{2}}{4}
	\end{align}	
where $A=(a_{1},b_{1},c_{1},d_{1})$, $B=(a_{2},b_{2},c_{2},d_{2})$, $a_{1}d_{1}-b_{1}c_{1}=1$, $a_{2}d_{2}-b_{2}c_{2}=1$, $F^{2}_{A,f}$ and $F^{2}_{B,f}$ are defined in Eq.(19).

In the following, we will derive the uncertainty principle for complex signal in two STLCT domains.

\begin{thm} (Uncertainty principle in tow STLCT domains)
Let $g\in L^2(\mathbb{R})$ be a window function. For a complex signal $f(\tau)=x(\tau)e^{i \phi(\tau)}\in L^{2}(\mathbb{R})$, then
\begin{align}
	F^{2}_{A,S}F^{2}_{B,S}\geq \frac{(a_{1}b_{2}-a_{2}b_{1})^{2}}{4}
	\end{align}	
where $A=(a_{1},b_{1},c_{1},d_{1})$, $B=(a_{2},b_{2},c_{2},d_{2})$, $a_{1}d_{1}-b_{1}c_{1}=1$, $a_{2}d_{2}-b_{2}c_{2}=1$.
\end{thm}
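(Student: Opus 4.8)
The plan is to mirror the structure of the proof of Theorem 1, combining the additive decomposition of the STLCT frequency spread from Lemma 2 with Zhao's two-domain uncertainty principle Eq.(41). First I would apply Lemma 2 separately to the two matrices: for $A=(a_{1},b_{1},c_{1},d_{1})$ one gets $F^{2}_{A,S}=F^{2}_{A,f}+F^{2}_{A_{1},g}$ with $A_{1}=(0,b_{1},-\frac{1}{b_{1}},d'_{1})$, and for $B=(a_{2},b_{2},c_{2},d_{2})$ one gets $F^{2}_{B,S}=F^{2}_{B,f}+F^{2}_{B_{1},g}$ with $B_{1}=(0,b_{2},-\frac{1}{b_{2}},d'_{2})$, where $d'_{1},d'_{2}\in\mathbb{R}$.

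Next I would expand the product into four terms,
\begin{align}
\begin{split}
F^{2}_{A,S}F^{2}_{B,S}&=\bigl(F^{2}_{A,f}+F^{2}_{A_{1},g}\bigr)\bigl(F^{2}_{B,f}+F^{2}_{B_{1},g}\bigr)\\
&=F^{2}_{A,f}F^{2}_{B,f}+F^{2}_{A_{1},g}F^{2}_{B_{1},g}+F^{2}_{A,f}F^{2}_{B_{1},g}+F^{2}_{A_{1},g}F^{2}_{B,f},
\end{split}
\end{align}
bound the first summand below by $\frac{(a_{1}b_{2}-a_{2}b_{1})^{2}}{4}$ using Eq.(41) (applied with exactly the matrices $A$ and $B$), and observe that each of the remaining three summands is a product of two manifestly non-negative quantities, every spread being the average of a square against a normalized density. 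Discarding those three non-negative terms then yields $F^{2}_{A,S}F^{2}_{B,S}\geq\frac{(a_{1}b_{2}-a_{2}b_{1})^{2}}{4}$, which is the assertion.

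The subtle point, and the reason no AM--GM step is needed here in contrast with Theorem 1, is the window term $F^{2}_{A_{1},g}F^{2}_{B_{1},g}$: because both auxiliary matrices $A_{1}$ and $B_{1}$ carry a zero in their first entry, applying Eq.(41) to $g$ with the pair $(A_{1},B_{1})$ produces only $F^{2}_{A_{1},g}F^{2}_{B_{1},g}\geq\frac{(0\cdot b_{2}-0\cdot b_{1})^{2}}{4}=0$, so the window contributes nothing to the lower bound beyond non-negativity. Thus the main work is really just the bookkeeping of which auxiliary LCT matrix is attached to the window, together with checking that Lemma 2's decomposition indeed holds for an arbitrary parameter matrix so that it may be invoked twice, once for $A$ and once for $B$.
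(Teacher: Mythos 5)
Your proposal is correct and is essentially the argument the paper intends: the paper's own ``proof'' is just the remark that it is similar to Theorem 1, and your filled-in version (apply Lemma 2 twice with the auxiliary window matrices $A_{1}=(0,b_{1},-\tfrac{1}{b_{1}},d'_{1})$, $B_{1}=(0,b_{2},-\tfrac{1}{b_{2}},d'_{2})$, expand the product, bound $F^{2}_{A,f}F^{2}_{B,f}$ by Eq.(41), and discard the remaining non-negative terms) is exactly that route. Your observation that the window pair $(A_{1},B_{1})$ yields only the trivial bound $(0\cdot b_{2}-0\cdot b_{1})^{2}/4=0$ is a worthwhile addition, since it explains why the constant here stays at $\tfrac{(a_{1}b_{2}-a_{2}b_{1})^{2}}{4}$ rather than being quadrupled as in Theorem 1.
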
	
\begin{proof}
The proof of Theorem 2 is similar to Theorem 1.
\end{proof}
When $A=(\cos\alpha,\sin\alpha,-\sin\alpha,\cos\alpha)$, $B=(\cos\beta,\sin\beta,-\sin\beta,\cos\beta)$, we can obtain the uncertainty principle for complex signal in two STLCT domains by Theorem 2, which is given in the following corollary.
\begin{corollary}
Let $g\in L^2(\mathbb{R})$ be a window function. For a complex signal $f(\tau)=x(\tau)e^{i \phi(\tau)}\in L^{2}(\mathbb{R})$, then
\begin{align}
	F^{2}_{\alpha,S}F^{2}_{\beta,S}\geq \frac{\sin^{2}(\alpha-\beta)}{4}
	\end{align}	
\end{corollary}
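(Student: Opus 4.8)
The plan is to obtain Corollary~4 as an immediate specialization of Theorem~2. First I would recall that the fractional Fourier transform of order $\alpha$ is precisely the LCT associated with the parameter matrix $A=(\cos\alpha,\sin\alpha,-\sin\alpha,\cos\alpha)$, so that the symbols $F^{2}_{\alpha,S}$ and $F^{2}_{\beta,S}$ appearing in the statement are to be read as $F^{2}_{A,S}$ and $F^{2}_{B,S}$ with $A=(\cos\alpha,\sin\alpha,-\sin\alpha,\cos\alpha)$ and $B=(\cos\beta,\sin\beta,-\sin\beta,\cos\beta)$. Before invoking Theorem~2 I would verify that its hypotheses are met: both matrices are admissible, since $\cos^{2}\alpha-(\sin\alpha)(-\sin\alpha)=\cos^{2}\alpha+\sin^{2}\alpha=1$ and likewise for $\beta$, so $a_{1}d_{1}-b_{1}c_{1}=1$ and $a_{2}d_{2}-b_{2}c_{2}=1$ hold; moreover $g\in L^{2}(\mathbb{R})$ and $f(\tau)=x(\tau)e^{i\phi(\tau)}\in L^{2}(\mathbb{R})$ are exactly the standing assumptions, so all the spreads involved are well defined.

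Next I would apply Theorem~2 with $a_{1}=\cos\alpha$, $b_{1}=\sin\alpha$, $a_{2}=\cos\beta$, $b_{2}=\sin\beta$, obtaining
\begin{align*}
F^{2}_{\alpha,S}F^{2}_{\beta,S}\geq\frac{(a_{1}b_{2}-a_{2}b_{1})^{2}}{4}=\frac{(\cos\alpha\sin\beta-\sin\alpha\cos\beta)^{2}}{4}.
\end{align*}
Then I would invoke the angle-subtraction identity $\cos\alpha\sin\beta-\sin\alpha\cos\beta=\sin(\beta-\alpha)=-\sin(\alpha-\beta)$, so that after squaring $(\cos\alpha\sin\beta-\sin\alpha\cos\beta)^{2}=\sin^{2}(\alpha-\beta)$, which yields $F^{2}_{\alpha,S}F^{2}_{\beta,S}\geq\frac{\sin^{2}(\alpha-\beta)}{4}$, as claimed.

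Since the whole argument reduces to substituting the FrFT parameters into an already-established inequality, there is no genuine analytic obstacle; the only point requiring care is bookkeeping — keeping the roles of $\alpha$ and $\beta$ straight (the sign of $\sin(\alpha-\beta)$ is immaterial once squared) and noting that Theorem~2 itself rests on Lemma~2 together with Zhao's two-domain LCT uncertainty inequality, both of which are valid under precisely the hypotheses assumed here. As a consistency check one may observe that the bound is nontrivial exactly when $\alpha\neq\beta\pmod\pi$, and that taking $\alpha=\pi/2$, $\beta=0$ recovers a short-time Fourier type statement.
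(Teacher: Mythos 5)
Your proposal is correct and follows exactly the route the paper intends: Corollary 3 is obtained by substituting the fractional-Fourier matrices $A=(\cos\alpha,\sin\alpha,-\sin\alpha,\cos\alpha)$ and $B=(\cos\beta,\sin\beta,-\sin\beta,\cos\beta)$ into Theorem 2 and simplifying $(\cos\alpha\sin\beta-\sin\alpha\cos\beta)^{2}=\sin^{2}(\alpha-\beta)$. Your added checks of the determinant conditions and the sign bookkeeping are fine but do not change the argument.
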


In order to obtain an uncertainty relation for the two conditional standard deviations of the spectrogram, we first
give the mean averages and conditional standard deviations, as follow:
\begin{align}
	\langle u\rangle_{t}=\frac{1}{Q(t)}\int _{-\infty}^{+\infty}u|S^{A}_{g}f(t,u)|^{2}\rm{d}\textit{u}
	\end{align}	
\begin{align}
	\langle t\rangle_{u}=\frac{1}{P(u)}\int _{-\infty}^{+\infty}t|s^{A}_{g}f(t,u)|^{2}\rm{d}\textit{t}
	\end{align}	
\begin{align}
	\sigma^{2}_{u|t}=\frac{1}{Q(t)}\int _{-\infty}^{+\infty}(u-\langle u\rangle_{t})^{2} |S^{A}_{g}f(t,u)|^{2}\rm{d}\textit{u}
	\end{align}	
\begin{align}
	\sigma^{2}_{t|u}=\frac{1}{P(u)}\int _{-\infty}^{+\infty}(t-\langle t\rangle_{u})^{2} |s^{A}_{g}f(t,u)|^{2}\rm{d}\textit{t}
	\end{align}	
\begin{thm} (Uncertainty principle for the two conditional standard deviations of the spectrogram)
Let $g\in L^2(\mathbb{R})$ be a window function. For a complex signal $f(\tau)=x(\tau)e^{i \phi(\tau)}\in L^{2}(\mathbb{R})$, then
\begin{align}
	\sigma^{2}_{u|t}\sigma^{2}_{t|u}\geq\frac{1}{Q(t)P(u)}\left|\int _{-\infty}^{+\infty}(S^{A}_{g}f(t,u))^{*}
(\frac{1}{2}[\mathcal{A},\mathcal{B}]_{+}+\frac{i}{2})
\textit{f}_{t}(\tau)\rm{d}\tau\right|^{2}
	\end{align}	
where $\mathcal{A}, \mathcal{B}$ are the Hermitian operators.
\end{thm}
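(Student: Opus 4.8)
The plan is to adopt the operator (Cohen--type) viewpoint: recast each of the two conditional variances as the expectation of the square of a mean--subtracted self--adjoint operator in the local signal $f_{t}$, and then combine the two through the Cauchy--Schwarz inequality and the elementary identity $\mathcal{A}\mathcal{B}=\tfrac12[\mathcal{A},\mathcal{B}]_{+}+\tfrac12[\mathcal{A},\mathcal{B}]$ (writing $\langle\cdot,\cdot\rangle$ for the $L^{2}(\mathbb{R})$ inner product). First I would differentiate the LCT kernel in its first slot, $\partial_{\tau}K_{A}(\tau,u)=\tfrac{i}{b}(a\tau-u)K_{A}(\tau,u)$, so that under the defining integral Eq.(7) the factor $u$ can be traded for the differential operator $a\tau-ib\,\tfrac{d}{d\tau}$; one integration by parts in $\tau$ (the boundary terms vanish since $t\,|f(t)|\in L^{2}(\mathbb{R})$ and $g\in L^{2}(\mathbb{R})$) then gives
\begin{align}
(u-\langle u\rangle_{t})\,S^{A}_{g}f(t,u)=\int_{-\infty}^{+\infty}(\mathcal{A}f_{t})(\tau)\,K_{A}(\tau,u)\,d\tau,\qquad \mathcal{A}:=a\tau-ib\,\tfrac{d}{d\tau}-\langle u\rangle_{t},
\end{align}
and $\mathcal{A}$ is self--adjoint on $L^{2}(\mathbb{R})$ because $a,b,\langle u\rangle_{t}\in\mathbb{R}$. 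Since the LCT is unitary on $L^{2}(\mathbb{R})$ (as already used in the proof of Lemma 2), the definition Eq.(45) becomes $\sigma^{2}_{u|t}=\tfrac{1}{Q(t)}\|\mathcal{A}f_{t}\|^{2}=\tfrac{1}{Q(t)}\langle f_{t},\mathcal{A}^{2}f_{t}\rangle$.

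Next I would carry out the dual computation for $\sigma^{2}_{t|u}$. By Eq.(14), $|S^{A}_{g}f(t,u)|^{2}=|s^{A}_{g}f(t,u)|^{2}$, and by Eq.(3), $K_{A}^{*}(t,\mu)=K_{A^{-1}}(\mu,t)$, so the short--frequency transform Eq.(13) is, as a function of $t$, the LCT with matrix $A^{-1}=(d,-b,-c,a)$ of the local spectrum $L_{u}$ of Eq.(10). Differentiating $K_{A}^{*}(t,\mu)$ in $\mu$ and integrating by parts produces a self--adjoint operator $\mathcal{B}_{0}=d\mu+ib\,\tfrac{d}{d\mu}-\langle t\rangle_{u}$ with $(t-\langle t\rangle_{u})\,s^{A}_{g}f(t,u)=\int(\mathcal{B}_{0}L_{u})(\mu)\,K_{A}^{*}(t,\mu)\,d\mu$, hence $\sigma^{2}_{t|u}=\tfrac{1}{P(u)}\|\mathcal{B}_{0}L_{u}\|^{2}$ by unitarity. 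Transporting $\mathcal{B}_{0}$ back to the $\tau$--picture through the inverse LCT together with Lemma 1 — where, once the unit--modulus chirps in the product $K_{A}(t,u)K_{A}^{*}(t,\xi)$ are factored out, the $t$--dependence of the kernels collapses to the single exponential $e^{\,i\xi t/b}$ — should yield a self--adjoint operator $\mathcal{B}$ acting on $f_{t}$ with $\sigma^{2}_{t|u}=\tfrac{1}{P(u)}\langle f_{t},\mathcal{B}^{2}f_{t}\rangle$, together with the canonical commutation relation $[\mathcal{A},\mathcal{B}]=i$ (the $1/b$ carried by $\mathcal{A}$ cancelling the $b$ carried by $\mathcal{B}$).

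Finally I would assemble the bound. By Cauchy--Schwarz and self--adjointness of $\mathcal{A},\mathcal{B}$,
\[
\sigma^{2}_{u|t}\,\sigma^{2}_{t|u}=\tfrac{1}{Q(t)P(u)}\,\|\mathcal{A}f_{t}\|^{2}\,\|\mathcal{B}f_{t}\|^{2}\ \ge\ \tfrac{1}{Q(t)P(u)}\,|\langle\mathcal{A}f_{t},\mathcal{B}f_{t}\rangle|^{2}=\tfrac{1}{Q(t)P(u)}\,|\langle f_{t},\mathcal{A}\mathcal{B}f_{t}\rangle|^{2},
\]
and then splitting $\mathcal{A}\mathcal{B}=\tfrac12[\mathcal{A},\mathcal{B}]_{+}+\tfrac12[\mathcal{A},\mathcal{B}]=\tfrac12[\mathcal{A},\mathcal{B}]_{+}+\tfrac{i}{2}$ and rewriting the pairing $\langle f_{t},(\tfrac12[\mathcal{A},\mathcal{B}]_{+}+\tfrac{i}{2})f_{t}\rangle$ in the STLCT representation reproduces the right--hand side of Eq.(47). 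The kernel differentiations, integrations by parts, and Parseval bookkeeping are routine; the genuine obstacle is the second step — producing the operator $\mathcal{B}$ that represents multiplication by $t$ as a differential operator acting on $f_{t}(\tau)$ (the variable $t$ enters $S^{A}_{g}f$ only through the translated window $g^{*}(\tau-t)$, not through the kernel), thereby recasting $\sigma^{2}_{t|u}$, which a priori lives over the spectral variable $\mu$, as an expectation in $f_{t}$ with the normalizing factor $P(u)$, and checking that $\mathcal{A}$ and $\mathcal{B}$ are canonically conjugate so that the commutator collapses to the constant $i$.
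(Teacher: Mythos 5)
Your skeleton -- representing the frequency-side conditional variance as a quadratic form, $\sigma^{2}_{u|t}=\frac{1}{Q(t)}\|\mathcal{A}f_{t}\|^{2}$ with a Hermitian operator obtained from $\partial_{\tau}K_{A}(\tau,u)=\frac{i}{b}(a\tau-u)K_{A}(\tau,u)$ and one integration by parts, then Cauchy--Schwarz, then the splitting $\mathcal{A}\mathcal{B}=\frac{1}{2}[\mathcal{A},\mathcal{B}]_{+}+\frac{1}{2}[\mathcal{A},\mathcal{B}]$ -- is exactly the operator route the paper takes (its Eq.~(48), (52)--(54)); indeed on the frequency side you are more careful than the paper, which drops the $a\tau$ term and uses only $\frac{b}{i}\frac{d}{d\tau}-\langle u\rangle_{t}$. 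But the step you yourself label ``the genuine obstacle'' -- recasting $\sigma^{2}_{t|u}$, which is defined as an integral over $t$ against $|s^{A}_{g}f(t,u)|^{2}$, as $\frac{1}{P(u)}\langle f_{t},\mathcal{B}^{2}f_{t}\rangle$ for a Hermitian $\mathcal{B}$ acting on the same function $f_{t}$, and verifying the commutator value -- is precisely where the content of the theorem lies, since the $\frac{i}{2}$ in Eq.~(47) is nothing but $\frac{1}{2}[\mathcal{A},\mathcal{B}]$. You leave that step as a plan (``should yield''), and your proposed transport is not a clean unitary conjugation: the local spectrum $L_{u}(\mu)=L^{f}_{A}(\mu)L^{g}_{A}(u-\mu)$ of Eq.~(10) is a product, not the LCT of $f_{t}$, so carrying $\mathcal{B}_{0}=d\mu+ib\frac{d}{d\mu}-\langle t\rangle_{u}$ back to the $\tau$-picture via Lemma 1 requires real work (and the identity Eq.~(14) as well). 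As it stands the proposal proves the easy bookkeeping and defers the decisive step, so it is incomplete.

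For comparison, the paper does not transport anything through the spectral variable: it simply takes the time-side operator to be plain multiplication, $\mathcal{A}=\tau-\langle t\rangle_{u}$ (the variable $t$ enters the STLCT only through the translated window, so conditional time moments at fixed $u$ are treated as moments of $\tau$), pairs it with $\mathcal{B}=\frac{b}{i}\frac{d}{d\tau}-\langle u\rangle_{t}$, writes the product $\sigma^{2}_{t|u}\sigma^{2}_{u|t}$ as the product of the two quadratic forms, and then applies Schwarz and the anticommutator/commutator decomposition. Be warned, too, that the cancellation you hope for does not obviously occur: with multiplication by $\tau$ as the time operator and a derivative term $-ib\,\frac{d}{d\tau}$ (with or without the $a\tau$ correction) as the frequency operator, the commutator is $ib$ rather than $i$ -- the paper asserts $[\mathcal{A},\mathcal{B}]=i$ without computation -- so if you pursue your route you must track these constants explicitly, because they determine whether the additive term in the bound is $\frac{i}{2}$ or $\frac{ib}{2}$. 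Completing your second step honestly (constructing $\mathcal{B}$, proving $\sigma^{2}_{t|u}=\frac{1}{P(u)}\langle f_{t},\mathcal{B}^{2}f_{t}\rangle$, and computing the commutator) is what is missing before the final Cauchy--Schwarz line can be claimed.
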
	
\begin{proof}
According to Eq.(45), we have
\begin{align}
\begin{split}
	\sigma^{2}_{u|t}&=\frac{1}{Q(t)}\int _{-\infty}^{+\infty}(u-\langle u\rangle_{t})^{2} S^{A}_{g}f(t,u)(S^{A}_{g}f(t,u))^{*}\rm{d}\textit{u}\\
&=\frac{1}{Q(t)}\int _{-\infty}^{+\infty}(u-\langle u\rangle_{t})^{2}
\int _{-\infty}^{+\infty}f(\tau)g^{*}(\tau-t)K_{A}(\tau,u)\rm{d}\tau
\int _{-\infty}^{+\infty}f(\tau')g^{*}(\tau'-t)K_{A}(\tau',u)\rm{d}\tau'\rm{d}\textit{u}\\
&=\frac{1}{Q(t)}\int _{-\infty}^{+\infty}(\frac{b}{i}\frac{d}{d\tau}-\langle u\rangle_{t})^{2}
f(\tau)g^{*}(\tau-t)(f(\tau)g^{*}(\tau-t))^{*}\rm{d}\tau\\
&=\frac{1}{Q(t)}\int _{-\infty}^{+\infty}\left|(\frac{b}{i}\frac{d}{d\tau}-\langle u\rangle_{t}) f_{t}(\tau)\right|^{2}\rm{d}\tau\\
\end{split}
	\end{align}	
Let
\begin{align}
	\mathcal{A}=\tau-\langle t\rangle_{u};\ \ \mathcal{B}=\frac{b}{i}\frac{d}{d\tau}-\langle u\rangle_{t}
	\end{align}
then $\mathcal{A}, \mathcal{B}$ are the Hermitian operators [8,9,10].

So we have
\begin{align}
	\mathcal{A}\mathcal{B}=\frac{1}{2}[\mathcal{A},\mathcal{B}]_{+}+\frac{i}{2}([\mathcal{A},\mathcal{B}]/i)
	\end{align}
and
\begin{align}
	[\mathcal{A},\mathcal{B}]=i
	\end{align}
By Eq.(46), Eq.(48) and Eq.(49), we obtain
\begin{align}
	\sigma^{2}_{t|u}\sigma^{2}_{u|t}=\frac{1}{P(u)}\int _{-\infty}^{+\infty}\mathcal{A}^{2} |S^{A}_{g}f(t,u)|^{2}\rm{d}\tau\frac{1}{Q(t)}\int _{-\infty}^{+\infty}\mathcal{B}^{2} |\textit{f}_{t}(\tau)|^{2}\rm{d}\tau
	\end{align}
According to the Schwarz inequality
\begin{align}
	\int _{-\infty}^{+\infty}|f(t)|^{2}\rm{d}\textit{t}\int _{-\infty}^{+\infty}|\textit{g}(t)|^{2}\rm{d}\textit{t}
\geq\left|\int _{-\infty}^{+\infty}\textit{f}^{*}(t)\textit{g}(t)\rm{d}\textit{t}\right|^{2}
	\end{align}
we can obtain
\begin{align}
\begin{split}
	\sigma^{2}_{t|u}\sigma^{2}_{u|t}&=\frac{1}{Q(t)P(u)}\int _{-\infty}^{+\infty}\mathcal{A}^{2} |S^{A}_{g}f(t,u)|^{2}\rm{d}\tau\int _{-\infty}^{+\infty}\mathcal{B}^{2} |\textit{f}_{t}(\tau)|^{2}\rm{d}\tau\\
&\geq\frac{1}{Q(t)P(u)}\left|\int _{-\infty}^{+\infty}(\mathcal{A}S^{A}_{g}f(t,u))^{*}\mathcal{B}\textit{f}_{t}(\tau)\rm{d}\tau\right|^{2}\\
&=\frac{1}{Q(t)P(u)}\left|\int _{-\infty}^{+\infty}(S^{A}_{g}f(t,u))^{*}\mathcal{A}\mathcal{B}\textit{f}_{t}(\tau)\rm{d}\tau\right|^{2}
\end{split}
	\end{align}
Using Eq.(50) and Eq.(51), we have
\begin{align}
\begin{split}
	\sigma^{2}_{t|u}\sigma^{2}_{u|t}
&\geq\frac{1}{Q(t)P(u)}\left|\int _{-\infty}^{+\infty}(S^{A}_{g}f(t,u))^{*}
(\frac{1}{2}[\mathcal{A},\mathcal{B}]_{+}+\frac{i}{2}([\mathcal{A},\mathcal{B}]/i))
\textit{f}_{t}(\tau)\rm{d}\tau\right|^{2}\\
&=\frac{1}{Q(t)P(u)}\left|\int _{-\infty}^{+\infty}(S^{A}_{g}f(t,u))^{*}
(\frac{1}{2}[\mathcal{A},\mathcal{B}]_{+}+\frac{i}{2})
\textit{f}_{t}(\tau)\rm{d}\tau\right|^{2}\\
\end{split}
	\end{align}
\end{proof}

\noindent  {\bf 4. Conclusions}

In this paper, many different forms of uncertainty principles for complex signals associated with the STLCT are derived.
Firstly, based on the relation between the STLCT and the LCT, one uncertainty principle for the STLCT of complex signals in time and frequency domains is derived. Secondly, the uncertainty
principle for the STLCT of complex signals in tow STLCT domains is obtained. Finally, according to the Hermitian operators, the uncertainty principle for the two conditional standard deviations of the spectrogram associated with the STLCT is discussed.
These uncertainty principles are new results
and it can be viewed as a generalized form of the STFT.

Our further work will think about these applications of the STLCT in signal processing and these uncertainty principles for discrete signals.

\medskip

\noindent  {\bf Acknowledge}

This work is supported by the National Natural Science Foundation of China (No. 61671063).


\end{document}